\theoremstyle{plain}
\newtheorem{theorem}{Theorem}
\newtheorem{lemma}[theorem]{Lemma}
\theoremstyle{definition}
\newtheorem{remark}{Remark}
\DeclareMathOperator{\diag}{diag}
\DeclareMathOperator{\Mat}{Mat}
\title{On matrix Painlev\'e II equations}
\date{6 December 2020}
\author{V.E.~Adler\thanks{L.D.~Landau Institute for Theoretical Physics, Chernogolovka, Russian Federation.}, 
\,\,V.V.~Sokolov$^*$\thanks{Federal University of ABC, Santo Andr\'e, Sao Paulo, Brazil. E-mail: vsokolov@landau.ac.ru}}
\begin{document}
\maketitle

\begin{abstract}
The Painlev\'e--Kovalevskaya test is applied to find three matrix versions of the Painlev\'e II equation. All these equations are interpreted as group-invariant  reductions of integrable matrix evolution equations, which makes it possible to construct isomonodromic Lax pairs for them.
\medskip

\noindent{Keywords: matrix system, scaling, group-invariant reduction, Painlev\'e equation}
\end{abstract}

\section{Introduction}

The second of the six famous Painlev\'e equations reads
\begin{equation}\label{P2}
 y''= 2 y^3 +z y + a,
\end{equation}
where $'$ denotes the derivative with respect to $z$ and $a$ is an arbitrary complex constant. A matrix version of the Painlev\'e--Kovalevskaya test was proposed in \cite{Balandin_Sokolov_1998}, where it was proved that it holds for \eqref{P2} in the case when $y(z)$ is a matrix of arbitrary size $n\times n$ and $a$ is a scalar matrix.   

One of possible generalizations of the matrix equation \eqref{P2} is as follows. It is clear that in the non-commutative case one can change the principal differential-homogeneous part $y''= 2 y^3$ of this equation by adding the term of the same weight $\kappa [y, y']$, where $\kappa\in{\mathbb C}$. In particular, from the results of Section 2 it follows that the equation
$$
 y''= \kappa [y,y'] +2y^3 +z y +a,\qquad a\in \mathbb C
$$
satisfies the matrix Painlev\'e--Kovalevskaya test if and only if $\kappa = 0, \pm 1, \pm 2.$ Note that equations with opposite signs of $\kappa$ are related by the changes $y\to -y$, $a\to -a$ or $y\to y^T$.

Another direction for generalizations was suggested by the results of \cite{Retakh_Rubtsov_2010}, where the equation P$_2$ with matrix coefficients appeared. More precisely, in this paper the term linear in $y$ was written as $\tfrac{1}{2}(zy+yz)$ where $z$ denoted a non-commutative dependent variable such that $z'= 1$. For our purposes it is more convenient to replace this variable with $z+2b$, where $z$ is a commutative independent variable and $b$ is a matrix constant.

In this paper, we study matrix generalizations of the P$_2$ equation of the general form
\begin{equation}\label{gP2}
 y''= \kappa[y,y'] +2y^3 +z y +b_1 y +y b_2 +a,
\end{equation}
where $a$, $b_1$ and $b_2$ are matrix constants and $\kappa$ is a scalar constant.

\begin{remark} 
Equation (\ref{gP2}) is invariant under the change
\begin{equation}\label{bbb}
b_1\to b_1+\beta_1I,\qquad  b_2\to b_2+\beta_2I,\qquad z\to z-\beta_1-\beta_2, 
\end{equation} 
where $\beta_i\in{\mathbb C}$ and $I$ is the identity matrix.
\end{remark} 

In the second section, by use of the matrix version of Painlev\'e--Kovalevskaya test, we examine the sets of $\kappa$, $a$, $b_1$ and $b_2$ which are candidates for integrable cases and prove the following statement.

\begin{theorem}
The equation \eqref{gP2} satisfies the Painlev\'e--Kovalevskaya test if and only if it is reduced by the transformation \eqref{bbb} to one of the following cases:
\begin{alignat}{2}
\label{P20}\tag*{P$_2^0$}
 & y''= 2y^3+zy+by+yb+\alpha I, &\qquad& \alpha\in{\mathbb C},~~ b\in\Mat_n,\\
\label{P21}\tag*{P$_2^1$}
 & y''= \pm[y,y']+2y^3+zy+a, && a\in\Mat_n,\\
\label{P22}\tag*{P$_2^2$}
 & y''= \pm2[y,y']+2y^3+zy+by+yb+a, && a,b\in\Mat_n,~~ [b,a]=\pm2b.
\end{alignat}
\end{theorem}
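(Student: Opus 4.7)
The plan is to apply the matrix Painlev\'e--Kovalevskaya test of \cite{Balandin_Sokolov_1998} directly to \eqref{gP2}. I would substitute a formal Laurent series $y = A\zeta^{-1} + \sum_{k\ge 0} y_k \zeta^k$ around a movable singularity, with $\zeta = z - z_0$, derive the linear recurrence for the matrix coefficients $y_k$, and isolate the algebraic conditions on $(\kappa, a, b_1, b_2)$ under which the principal branch carries the maximal number of free matrix parameters.

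At the leading order $\zeta^{-3}$, the commutator $[y,y']$ contributes $[A,-A] = 0$, so $y'' = 2y^3 + \cdots$ reduces to $A^3 = A$; the principal branch is fixed by requiring $A^2 = I$, and its signature $(p, n-p)$ parameterises the branch. From order $\zeta^{k-3}$ onwards, equating coefficients gives an equation $\mathcal L_k(y_k) = F_k$ where $\mathcal L_k \colon \Mat_n \to \Mat_n$ is linear in $y_k$ with coefficients depending on $k, A, \kappa, b_1, b_2$, and $F_k$ is a polynomial in $y_0,\dots,y_{k-1}$ and the parameters. Because $A^2 = I$, the operator $\mathcal L_k$ preserves the decomposition of $\Mat_n$ under $\mathrm{ad}_A$ into the $0$- and $(\pm 2)$-eigenspaces, so it block-diagonalises; the scalar Painlev\'e~II resonance at $y_3$ then splits into several matrix resonances, one per eigenblock. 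At each resonance the compatibility condition $F_k \in \mathrm{Im}\,\mathcal L_k$ must hold identically in the free parameters already introduced at lower levels, for every admissible signature of $A$.

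The core of the proof is the resolution of these compatibility identities. Projecting onto the $\mathrm{ad}_A$-eigencomponents converts them into polynomial matrix equations in $A, a, b_1, b_2, \kappa$. I expect the lowest nontrivial identity to isolate a factor proportional to $\kappa(\kappa^2 - 1)(\kappa^2 - 4)$, forcing $\kappa \in \{0, \pm 1, \pm 2\}$ as already announced in the introduction. For each surviving value of $\kappa$, the remaining identities pin down $a, b_1, b_2$ up to the gauge \eqref{bbb}: for $\kappa = 0$ they should force $a$ to be scalar and $b_1 - b_2$ to be scalar, giving P$_2^0$ after symmetrisation by \eqref{bbb}; for $\kappa = \pm 1$ they should force $b_1, b_2$ themselves to be scalar and so removable by \eqref{bbb}, giving P$_2^1$ with $b_1 = b_2 = 0$; and for $\kappa = \pm 2$ they should enforce $b_1 = b_2 = b$ together with the twist relation $[b, a] = \pm 2 b$ of P$_2^2$.

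The principal obstacle is the non-commutative bookkeeping in the recurrence: one must carry the matrix Laurent computation through the resonance order while tracking all commutators and anticommutators with $A, b_1, b_2$, and then disentangle the resulting compatibility identities by projecting onto $\mathrm{ad}_A$-eigencomponents and varying over admissible $A$. One must also take care to use \eqref{bbb} only for absorbing genuinely redundant scalar shifts of $b_1$ and $b_2$, so that the three normal forms emerge without merging or splitting cases spuriously.
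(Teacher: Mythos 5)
Your overall strategy (Laurent ansatz, matrix recurrence, resonance analysis) is the one the paper follows, but your identification of the principal branch is wrong, and the error propagates through the whole argument. From $A^3=A$ you normalise to $A^2=I$; the paper instead shows by a dimension count that the only branch capable of carrying the required $2n^2-1$ arbitrary constants is the one with $A$ of \emph{rank one}, i.e. $A=u\,v^T$ with $u^Tv=1$ (Jordan form $\diag(1,0,\dots,0)$, so $A^2=A$). Concretely, for the Jordan form $\diag(E_k,-E_m,0_{n-k-m})$ the total number of free constants available from the orbit of $A$ together with the kernels of the resonant operators is at most $2n^2-k^2-m^2$, which equals $2n^2-1$ only for $(k,m)=(1,0)$ or $(0,1)$; your normalisation $A^2=I$ forces $k+m=n$, and then $k^2+m^2\ge n^2/2$, so the count falls short for every $n\ge2$. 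Moreover, with the rank-one $A$ the relevant linearised operator is $L_\sigma(c)=A^2c+AcA+cA^2+\sigma(Ac-cA)$ with eigenvalues $0$, $1\pm\sigma$, $3$ on a $(1,n-1)\times(1,n-1)$ block decomposition — not the $\mathrm{ad}_A$-eigenspaces with eigenvalues $0,\pm2$ that you propose to project onto — so the block structure of the compatibility conditions you would need to resolve is different from the one you set up.

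A second, related misconception concerns the origin of the restriction on $\kappa$. It does not come from a compatibility identity carrying a factor $\kappa(\kappa^2-1)(\kappa^2-4)$. In the paper the eigenvalues $1\pm\tfrac{j+1}{2}\kappa$ resonate with $\tfrac{j(j-1)}{2}$ exactly at $j=2\pm\kappa$, and \emph{both} of these indices must be non-negative integers for the corresponding eigenspaces to contribute their arbitrary constants; otherwise the formal series still exists but is not generic. So for, say, $\kappa=\tfrac12$ nothing obstructs the recursion — the test fails by a deficit of parameters, not by an inconsistency, which is a different failure mode from the one your plan would detect. Finally, your outline of the conditions on $a,b_1,b_2$ for each admissible $\kappa$ matches the theorem but is essentially a restatement of it; the actual derivation requires solving the recurrence block-by-block at $j=0,1,2,3$ (and $j=4$ for $\kappa=\pm2$) and, crucially, exploiting that the solvability conditions must hold for \emph{every} rank-one $A$ in the orbit (equivalently, invariance under conjugation), which is what forces $a$ and $b_1-b_2$ to be scalar for $\kappa=0$, $b_1,b_2$ scalar for $\kappa=\pm1$, and $b_1=b_2=b$ with $[b,a]=\pm2b$ for $\kappa=\pm2$.
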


Note that \ref*{P20} is exactly the equation from \cite{Retakh_Rubtsov_2010}. The equations \ref*{P21} and \ref*{P22} appear to be new.

In Section 3, the integrability of the found cases is substantiated by construction of the isomonodromic Lax pairs
\begin{equation}\label{AB}
 A'=B_\zeta+[B,A],
\end{equation}
where $\zeta$ is a spectral parameter, $A(z,\zeta)$ and $B(z,\zeta)$ are $2\times2$ matrices with non-commutative entries. For equation \ref*{P20}, such a representation was found in \cite{Irfan_2012}.

One of the methods for obtaining representations \eqref{AB} is based on the observation that group-invariant solutions of evolution equations admitting zero curvature representations satisfy ordinary differential equations of the Painlev\'e type. In this case, the isomonodromic representation is obtained from the zero curvature representation by a standard procedure. In particular, it is well known that the P$_2$ equation corresponds to Galilean-invariant solutions of the nonlinear Schr\"odinger equation, as well as to scale-invariant solutions of the mKdV equation. Here we generalize these reductions to the matrix case.

One source of non-commutative coefficients in \eqref{gP2} is the arbitrary matrices contained in the symmetry groups of non-Abelian \cite{Marchenko_1986, OS_1998a} evolutionary equations. For example, the matrix nonlinear Schr\"odinger equation
\begin{equation}\label{NLS}
     u_t=u_{xx}+2uvu ,\qquad v_t= -v_{xx}-2vuv   
\end{equation}
admits transformations of the form $u\to b_1 u b_2$, $v\to b_2^{-1} v b_1^{-1}$, where $b_i\in\Mat_n.$ Two well-known matrix generalizations of the mKdV equation \cite{Marchenko_1986, Khalilov_Khruslov_1990}
\begin{gather*}
 u_t= u_{xxx}-3u^2u_x-3u_xu^2,\\[1.5mm]
 u_t= u_{xxx}+3[u,u_{xx}]-6uu_xu
\end{gather*}
obviously admit the transformation group $u \to b u b^{-1}.$

Another source of matrix coefficients in \eqref{gP2} is related with matrix coefficients in integrable evolution equations themselves. In this paper, we use the following integrable version of the matrix mKdV equation:
$$
 u_t= u_{xxx}+3[u, u_{xx}]-6u u_x u + (u_x+u^2) b + b (u_x-u^2), \qquad b\in\Mat_n.
$$
Its zero curvature representation is given in Section 3. We do not know whether this generalization appeared in the literature.

In Section 3, we show that equations P$_2^i$ from Theorem 1 are obtained from the above matrix evolution equations by some group-invariant reductions. This made possible to find the isomonodromic Lax pairs \eqref{AB} for the matrix Painlev\'e II equations.

\section{Proof of Theorem 1}\label{s:P2-test}
 
\subsection{The Painlev\'e--Kovalevskaya matrix test}

The Painlev\'e--Kovalevskaya matrix test \cite{Balandin_Sokolov_1998} for equation (\ref{gP2}) is based on counting of arbitrary scalar constants in a formal solution of the form
\begin{equation}\label{y}
 y=\frac{p}{z-z_0}+c_0+c_1(z-z_0)+\cdots\,, \qquad 
 p,c_j \in\Mat_n,\quad z_0\in{\mathbb C}.
\end{equation}
Here $z_0$ is one of these arbitrary constants. In order for the series $y$ to represent a generic solution, it is necessary that the matrices $p$ and $c_j$ contain additionally $2n^2-1$ arbitrary constants. We assume that the Painlev\'e--Kovalevskaya test is fulfilled if such matrices exist. Note that there may also exist other formal solutions of the form \eqref{y} that contain fewer arbitrary constants.

\begin{remark} 
For any nondegenerate matrix $T$, the series $T y T^{-1}$ satisfies equation (\ref{gP2}), where $b_i\to \bar b_i \stackrel{def}{=} T b_i T^{-1}$ and $a \to \bar a\stackrel{def}{=}T a T^{-1}$. Hence, equation (\ref{gP2}) satisfies the Painlev\'e--Kovalevskaya test simultaneously with the equation corresponding to the coefficients $\bar b_i$ and $\bar a$.
\end{remark} 

Substituting the series into the equation and collecting the coefficients at powers of $z-z_0$, we obtain relations of the form
\begin{gather}
\label{ep3}
 p^3=p,\\ 
\label{eLc}
 L_{\frac{j+1}{2}\kappa}(c_j) -\frac{j(j-1)}{2}c_j =f_j(z_0,p,c_0,\dots,c_{j-1}),\quad j\ge 0, 
\end{gather}
where 
\[
 L_\sigma(c) \stackrel{def}{=} p^2c + pcp+ cp^2 +\sigma(pc-cp).
\]
The first few functions $f_i$ are easy to compute explicitly, for instance,
\[
 f_0=0,\qquad f_1=-pc^2_0-c_0pc_0-c^2_0p-\frac{1}{2}(z_0p+b_1p+pb_2).
\]
Since $f_j$ in the right-hand sides of (\ref{eLc}) do not contain $c_j$, the matrices $c_j$ can be calculated from these equations recursively. If the linear operator in the left-hand side is invertible, then $c_j$ is uniquely determined. If, for some $j$, the corresponding operator is degenerate then, firstly, the answer contains arbitrary constants in the amount equal to the dimension of the kernel; secondly, the solvability conditions lead to some constraints on $f_j$, from which the conditions on the parameters $b_1$, $b_2$ and $a$ should be extracted.

We start from calculating the possible total number of arbitrary constants. The number of arbitrary constants in the matrix $p$ is equal to the dimension of the orbit of its Jordan form. It is easy to see that the Jordan form of any matrix $p$ satisfying \eqref{ep3} is
\[
 p=\diag(E_k,\, -E_m,\, 0_{n-k-m}).
\]  
When a group acts on a manifold, the dimension of the orbit of a point is equal to the difference of the manifold dimension and the dimension of the stabilizer of this point, that is, the subgroup that leaves the point fixed. In our case, the dimension of the manifold is $n^2$, the stabilizer consists of non-degenerate matrices which commute with $p$. It is easy to see that its dimension is $k^2+m^2+(n-k-m)^2$; whence it follows that the dimension of the orbit of $p$ is equal to $2m(n-m)+2k(n-k)-2km$.

Next, we calculate the dimensions of the eigenspaces of the linear operator $L_\sigma:\Mat_n\to\Mat_n$. 

\begin{lemma}\label{l:dim} 
The eigenvalues of the operator $L_\sigma$ belong to the set
\[
 \lambda_0=0,\quad \lambda_{\pm 1}=1\pm \sigma, \quad \lambda_{\pm 2}=1\pm 2 \sigma, \quad \lambda_3=3.
\]  
The space $\Mat_n$ decomposes into the direct sum of eigenspaces
\[
 \Mat_n=V_0\oplus V_{-1}\oplus V_1\oplus V_{-2}\oplus V_2\oplus V_3,\qquad L_\sigma V_i= \lambda_iV_i,
\]
with dimensions
\begin{gather*}
\dim V_0 = (n-k-m)^2, \qquad  \dim V_{\pm1} = k(n-k)+m(n-m) - 2 k m,\\  
\dim V_{\pm2} = k m, \qquad  \dim V_3 = k^2+m^2.
\end{gather*}
In the case when the eigenvalues coincide, the dimensions of the corresponding eigenspaces add up.
\end{lemma}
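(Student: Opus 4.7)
The plan is to exploit the diagonal structure of $p$ to reduce the eigenvalue problem for $L_\sigma$ to a per-entry calculation in the matrix-unit basis. Invoking Remark 2 for the operator $L_\sigma$ itself (conjugation by a common nondegenerate $T$ intertwines $L_\sigma^{p}$ with $L_\sigma^{TpT^{-1}}$ and is an isomorphism of $\Mat_n$), I may assume without loss of generality that $p=\diag(\pi_1,\dots,\pi_n)$ with exactly $k$ entries equal to $1$, $m$ entries equal to $-1$, and $n-k-m$ entries equal to $0$. In this basis all of $p^2c$, $pcp$, $cp^2$, $pc$, and $cp$ act diagonally on every matrix unit $E_{ab}$.

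A direct computation then yields
$$
 L_\sigma(E_{ab})=\bigl[\pi_a^2+\pi_a\pi_b+\pi_b^2+\sigma(\pi_a-\pi_b)\bigr]E_{ab},
$$
so the spectrum is read off by running through the nine ordered pairs $(\pi_a,\pi_b)\in\{1,-1,0\}^2$. The diagonal pairs $(1,1)$ and $(-1,-1)$ each produce the eigenvalue $3$, and $(0,0)$ produces $0$; the pair $(1,-1)$ gives $\lambda_2=1+2\sigma$, and $(-1,1)$ gives $\lambda_{-2}=1-2\sigma$; the remaining four pairs split as $(1,0),(0,-1)\mapsto\lambda_1=1+\sigma$ and $(0,1),(-1,0)\mapsto\lambda_{-1}=1-\sigma$. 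The number of index pairs $(a,b)$ realising a given $(\pi_a,\pi_b)$ is the product of the multiplicities of the two values, which immediately gives $\dim V_0=(n-k-m)^2$, $\dim V_3=k^2+m^2$, $\dim V_{\pm 2}=km$, and $\dim V_{\pm 1}=k(n-k-m)+m(n-k-m)=(k+m)(n-k-m)$. A short algebraic manipulation rewrites the last quantity as $k(n-k)+m(n-m)-2km$, matching the statement, and the six dimensions sum to $n^2$ as a sanity check.

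There is no substantive obstacle: the whole argument reduces to the observation that $L_\sigma$ is built from commuting operators of left and right multiplication by $p$, which are simultaneously diagonalised by the matrix-unit basis adapted to $p$. The only bookkeeping subtlety concerns those special values of $\sigma$ at which two of the six nominal eigenvalues collide (e.g.\ $\sigma=0$ merges $\lambda_{\pm 1}$ with $\lambda_{\pm 2}$); the lemma handles this by summing the relevant dimensions, which is automatic from the direct-sum decomposition into spans of matrix units.
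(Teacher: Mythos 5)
Your proof is correct and is essentially the paper's own argument: the paper computes $L_\sigma$ on the $3\times3$ block decomposition of $c$ adapted to the Jordan form $\diag(E_k,-E_m,0_{n-k-m})$, which is exactly your matrix-unit calculation with the units grouped by the pair $(\pi_a,\pi_b)$. The eigenvalues, the dimension counts, and the treatment of coinciding eigenvalues all match.
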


\begin{proof} 
We represent $c$ as a $3\times3$ block matrix with the block sizes determined by the Jordan form of $p$:
\[
 c=\bordermatrix{  
    & k      & m      & l \cr
  k & c_{11} & c_{12} & c_{13} \cr
  m & c_{21} & c_{22} & c_{23} \cr
  l & c_{31} & c_{32} & c_{33}},
 \qquad l=n-k-m.
\]
Then it is easy to check that
\[
 L_\sigma(c)=\begin{pmatrix}
  3c_{11} & (1+2\sigma)c_{12} & (1+\sigma)c_{13}\\ 
  (1-2\sigma)c_{21} & 3c_{22} & (1-\sigma)c_{23}\\ 
  (1-\sigma)c_{31}  & (1+\sigma)c_{32} & 0 
 \end{pmatrix},
\]
that is, each eigenspace consists of one or two blocks of the matrix $c$. The dimensions of the eigenspaces are calculated by summing the block sizes.
\end{proof}

The operator in the left hand side of (\ref{eLc}) degenerates if one of the eigenvalues of the operator $L_{\frac{j+1}{2}\kappa}$ coincides with $\frac{j(j-1)}{2}$ (we call this the resonance condition). In particular, the eigenvalue $\lambda_0=0$ appears twice: for $j=0$ and for $j=1$, and the eigenvalue $\lambda_3=3$ appears once for $j=2$. The conditions that the eigenvalues $\lambda_{\pm1}$ and $\lambda_{\pm2}$ are resonant for some index $j$, are given, respectively, by equations
\[
 1\pm \frac{\kappa (j+1)}{2}=\frac{j(j-1)}{2} \quad\text{and}\quad
 1\pm \kappa (j+1)=\frac{j(j-1)}{2}.
\]
Cancelling this by $j+1$ (recall that $j\ge0$), we obtain the resonance conditions
\[
 \lambda_{\pm1}:~ j=2 \pm \kappa,\qquad
 \lambda_{\pm2}:~ j=2 \pm 2\kappa.
\]
We see that for a fixed $\kappa$ any of eigenspaces $V_{\pm 1}$, $V_{\pm 2}$ can occur at its resonant $j$ no more than once. This makes possible to estimate from above their total dimension, that is, the number of possible arbitrary constants. Namely, if both $\lambda_{\pm1}$ and $\lambda_{\pm2}$ appear in (\ref{eLc}), then the total sum of dimensions is
\begin{multline*}
 \qquad 2\dim V_0+\dim V_{-1}+\dim V_1+\dim V_{-2}+\dim V_2+\dim V_3\\
 = \dim V_0 +n^2 = (n-k-m)^2+n^2.\qquad
\end{multline*}
By adding the dimension of the orbit of $p$ we obtain that the total number of arbitrary constants is equal to $2n^2-k^2-m^2$. This is equal to $2n^2-1$ only for $k=1$, $m=0$ or for $k=0$, $m=1$. The second case is reduced to the first one by the change $y\to -y$. In the first case the Jordan form of $p$ is $\diag(1,0,\dots,0)$ and its orbit ${\cal O}$ consists of the matrices of the form $u \, v^T$, where $u$ and $v$ are column vector such that $u^T v=1.$ 

Thus, we assume that $p\in {\cal O}$. Then the eigenvalues $\lambda_{\pm 2}$ from the Lemma \ref{l:dim} disappear. To provide the required number of arbitrary constants, the series \eqref{y} must exist for the general element of the orbit. More precisely, we require that:
\begin{itemize}
    \item for any column vectors $u$ and $v$ such that $u^T v=1$, there exists a formal solution of the form
    $$
 y=\frac{u\,v^T}{z-z_0}+c_0+c_1(z-z_0)+\cdots\,, \qquad 
 p,c_j \in\Mat_n,\quad z_0\in{\mathbb C},
    $$
such that its coefficients $c_i$ contain in total $2n^2-2n+1$ arbitrary constants.
\end{itemize}

{\bf Proof of Theorem 1.} The above counting of arbitrary constants shows that a necessary condition for passing the Painlev\'e--Kovalevskaya test is that {\em both} eigenvalues $\lambda_{\pm 1}$ must be resonant. The resonance conditions for $j=2\pm\kappa$ imply that the parameter $\kappa$ must be such that both numbers $2\pm\kappa$ are non-negative integers. This leaves the admissible values
\[
 \kappa=0,\,\pm 1,\,\pm 2.
\] 
The case $\kappa<0$ is reduced to $\kappa>0$ by the change $y\to y^T$. Therefore, further it is sufficient to analyze equations (\ref{eLc}) for each of the three cases $\kappa=0,1$ and 2. 

\subsection{Case $\kappa = 0$}\label{kap0}

Let $p=\diag(1,0,\dots,0)$. In this case the operator $L_\sigma$ is of the form 
\[
 L_\sigma:\quad 
 \bordermatrix{ & 1 & n-1 \cr
  ~~~1 & c_{11} & c_{12}\cr 
  n-1  & c_{21} & c_{22}} 
 \mapsto
 \begin{pmatrix}
  3c_{11} &  (1+\sigma)c_{12}\\ 
  (1-\sigma)c_{21} & 0 
 \end{pmatrix}.
\]
In order to analyze equations (\ref{eLc}), we write them block-wise by dividing the involved matrices into blocks of appropriate sizes:
\[
 c_j=\begin{pmatrix}
   c_{j,11} & c_{j,12} \\
   c_{j,21} & c_{j,22}
 \end{pmatrix},\quad
 b_1=\begin{pmatrix}
   b_{1,11} & b_{1,12} \\
   b_{1,21} & b_{1,22}
 \end{pmatrix},\quad
 b_2=\begin{pmatrix}
   b_{2,11} & b_{2,12} \\
   b_{2,21} & b_{2,22}
 \end{pmatrix},\quad
 a=\begin{pmatrix}
   a_{11} & a_{12} \\
   a_{21} & a_{22}
 \end{pmatrix}.
\]
Conditions on the matrices $b_1$, $b_2$ and $a$ arise from the requirement of existence of solutions for inhomogeneous linear systems \eqref{eLc} for the resonance values of $j$. In this case, the homogeneous system has a nontrivial solution and the conditions of the Kronecker--Capelli theorem must be satisfied.

Let us consider equations (\ref{eLc}) for $j=0,1,2$ and $3$. For $j=0$ we find $c_{0,11}=c_{0,12}=c_{0,21}=0$ and the block $c_{0,22}$ is arbitrary. 

Next, for $j=1$ we obtain
\[
 c_{1,11}=-\frac{1}{6}(z_0+b_{1,11}+b_{2,11}),\qquad
 c_{1,12}=-\frac{1}{2}b_{2,12},\qquad
 c_{1,21}=-\frac{1}{2}b_{1,21}
\]
and $c_{1,22}$ is arbitrary. So far, no conditions for the coefficients $b_1,b_2$ and $a$ appear. 

For $j=2$ we find, by substituting the obtained values, that
\[
 c_{2,11}=-\frac{1}{4}(1+a_{11}),\qquad
 c_{2,22}=\frac{1}{2}(a_{22}+b_{1,22}c_{0,22}+c_{0,22}b_{2,22}+z_0c_{0,22})+c^3_{0,22},
\]
and the blocks $c_{2,12}$ and $c_{2,21}$ are arbitrary. In addition, we obtain the restrictions on the coefficients in the form of the relations
$$
 (b_{2,12}-b_{1,12})c_{0,22}=a_{12},\qquad c_{0,22}(b_{1,21}-b_{2,21})=a_{21}.
$$
Since the block $c_{0,22}$ is arbitrary, this implies that
\begin{equation}\label{cond1}
 b_{1,12}=b_{2,12},\qquad b_{1,21}=b_{2,21},\qquad a_{12}=a_{21}=0.
\end{equation}

Finally, for $j=3$ it is sufficient to write down the condition for the $1\times1$ block
\begin{equation}\label{cond2}
 b_{1,12}b_{1,21}-2b_{2,12}b_{1,21}+b_{2,12}b_{2,21}=0,
\end{equation}
because all other equations are solved uniquely with respect to $c_{3,12}$, $c_{3,21}$ and $c_{3,22}$. The condition \eqref{cond2} follows from \eqref{cond1}. For $j>3$, there are no resonances and therefore all equations for $c_j$ are solved uniquely.

Thus, we proved that equations (\ref{eLc}) with $\kappa=0$ are solved with arbitrary blocks $c_{0,22}$, $c_{1,22}$, $c_{2,12}$, $c_{2,21}$ and $c_{3,11}$, which gives, as one can easily see, $2(n-1)^2+2(n-1)+1$ arbitrary constants. Moreover, we obtained the solvability conditions in the form of relations (\ref{cond1}) which mean that the matrices $a$ and $b_1-b_2$ commute with $p=\diag(1,0,\dots,0)$.    

By choosing as $p$ the matrices of the form $\diag(0,0,\dots, 1, \dots,0)$ which belong to $\cal O$, we obtain that $a$ and $b_1-b_2$ must commute with any such matrix, which means that they are diagonal. According to Remark 1, the matrices $T a T^{-1}$ and $T (b_1-b_2) T^{-1}$ must be diagonal for any nondegenerate $T$. This means that $a$ and $b_1-b_2$ are scalar matrices. Obviously, the condition $b_1=b_2+2\beta I$ is reduced to $b_1=b_2$ by the transform \eqref{bbb}. As the result, we arrive at equation \ref{P20}.

\subsection{Case $\kappa = 1$}

Like in the case $\kappa=0$, it is sufficient to analyze equations (\ref{eLc}) for $j=0,1,2$ and $3$. 

Let $p=\diag(1,0,\dots,0)$. Then we find for $j=0$ that $c_{0,11}=c_{0,12}=c_{0,21}=0$ and the block $c_{0,22}$ remains arbitrary. 

For $j=1$ we have
\[
 c_{1,11}=-\frac{1}{6}(b_{1,11}+b_{2,11}+z_0),\qquad c_{1,12}=-\frac{1}{4}b_{2,12},
\]
both blocks $c_{1,21}$ and $c_{1,22}$ are arbitrary and $b_1$ should satisfy the restriction
\[
 b_{1,21}=0.
\]
Now we repeat the reasoning from section \ref{kap0}. By choosing the matrices $\diag(0,\dots,1,\dots,0)$ as $p$, we obtain that $b_1$ must be diagonal. Then we prove that it is scalar. Finally, we make use of the transformation \eqref{bbb} and set $b_1=0$.

Next, for $j=2$ all blocks $c_2$ are uniquely determined:
\begin{gather*}
 c_{2,11}=-\frac{1}{4}(1+a_{11}),\qquad
 c_{2,12}=\frac{1}{12}(b_{2,12}c_{0,22}-4a_{12}),\\
 c_{2,21}=\frac{1}{3}(a_{21}+c_{0,22}b_{2,21}+3c_{0,22}c_{1,21}),\\
 c_{2,22}=\frac{1}{2}(a_{22}+z_0c_{0,22}+c_{0,22}b_{2,22}
  +[c_{0,22},c_{1,22}])+c^3_{0,22}.
\end{gather*}

For $j=3$, the blocks $c_{3,11}$ and $c_{3,12}$ remain arbitrary, and the blocks $c_{3,21}$ are $c_{3,22}$ uniquely determined. Moreover, the following relations appear as the solvability conditions for the linear system:
\[
 b_{2,12}(b_{2,21}+4c_{1,21})=0,\qquad b_{2,12}(z_0+b_{2,22}+2c_{1,22}+2c^2_{0,22})=0.
\]
Taking the arbitrariness of $c_{1,22}$ and $c_{0,22}$ into account, this amounts just to the condition $b_{2,12}=0$ which implies, as before, that the matrix $b_2$ is also scalar and then it also can be set to zero. There are no resonance values for $j>3$ and all other coefficients of the series \eqref{y} are uniquely found.  Thus, equations (\ref{eLc}) with $\kappa=1$ are solvable for vanishing $b_1$ and $b_2$, and with the arbitrariness in the blocks $c_{0,22}$, $c_{1,21}$, $c_{1,22}$, $c_{3,11}$ and $c_{3,12}$. Like in the previous case, this gives the required number of arbitrary constants. There are no restrictions for the matrix $a$, since it is not involved in the solvability conditions at all, and we obtain equation \ref{P21}. 

\subsection{Case $\kappa = 2$}

In this case we have to analyze equations for $j=0,1,2,3$ and $4$. 

For $j=0$ we find that $c_{0,11}=c_{0,12}=0$. In contrast to the previous cases, two blocks $c_{0,21}$ and $c_{0,22}$ remain arbitrary. 

For $j=1$ we have
\[
 c_{1,11}=-\frac{1}{6}(b_{1,11}+b_{2,11}+z_0),\qquad 
 c_{1,12}=-\frac{1}{6}b_{2,12},\qquad
 c_{1,21}=\frac{1}{2}b_{1,21}+c_{0,22}c_{0,21},
\]
the block $c_{1,22}$ is arbitrary, no solvability condition appears.

For $j=2$ all blocks of the matrix $c_2$ are found uniquely:
\begin{gather*}
 c_{2,11}=-\frac{1}{4}(a_{11}+1)+\frac{1}{12}(b_{2,12}-3b_{1,12})c_{0,21},\qquad
 c_{2,12}=-\frac{1}{6}(a_{12}+b_{1,12}c_{0,22}),\\
 c_{2,21}=\frac{1}{6}\big(a_{21}+b_{1,22}c_{0,21}-c_{0,21}b_{1,11}+c_{0,22}(2b_{1,21}+b_{2,21})\big)
    +c^2_{0,22}c_{0,21},\\
 c_{2,22}=\frac{1}{2}(a_{22}+z_0c_{0,22}+b_{1,22}c_{0,22}+c_{0,22}b_{2,22})
  +[c_{0,22},c_{1,22}]+\frac{1}{6}c_{0,21}b_{2,12}+c^3_{0,22}.    
\end{gather*}

For $j=3$ the block $c_{3,11}$ is arbitrary, the other blocks are uniquely determined and the first solvability condition appears:
\[
 3b_{1,12}b_{1,21}-2b_{2,12}b_{1,21}-b_{2,12}b_{2,21}
  +6(b_{1,12}-b_{2,12})c_{0,22}c_{0,21}=0.
\]
Due to arbitrariness of $c_{0,22}$ and $c_{0,21}$ we conclude from here that $b_{1,12}-b_{2,12}=0$. By repeating the same reasoning as in the previous cases we prove that the matrices $b_1$ and $b_2$ coincide up to a scalar matrix; then we can set $b_1=b_2=b$ by transformation \eqref{bbb}. It is easy to see that after this the remaining terms in the above relation cancel out.

Finally, for $j=4$ the block $c_{4,12}$ can be choosen arbitrarily if the following condition is fulfilled:
\[
 b_{11}a_{12}+b_{12}a_{22}=a_{11}b_{12}+a_{12}b_{22}+2b_{12}.
\]
This is nothing but $([b,a]-2b)_{12}=0$, and again, by the reasoning based on the arbitrariness of $p$, we conclude that the matrices $a$ and $b$ must satisfy the relation\footnote{Under the transposition, $\kappa=2$ is replaced with $\kappa=-2$ and the sign of the commutator is changed as well.}
\[
 [b,a]=2b.
\]
The rest blocks of $c_4$ and all subsequent $c_j$ are found uniquely. Thus, under the above condition the coefficients of the series are determined with the arbitrariness in the blocks $c_{0,21}$, $c_{0,22}$, $c_{1,22}$, $c_{3,11}$ and $c_{4,12}$ which have the required total dimension. As the result, we obtain equation \ref{P22}. 

The sufficiency of the obtained conditions on the coefficients $\kappa$, $a$ and $b$ for the existence of the series \eqref{y} with arbitrary matrix $p\in{\cal O}$ and the required number of arbitrary constants is proved by the same reasoning in each of three cases. Since the series \eqref{y} with $p=\diag(1,0,\dots,0)$ exists for any  $\kappa$, $a$ and $b$ satisfying the conditions of the theorem, we can replace them with $\kappa$, $\bar a = T a T^{-1}$ and $\bar b=T b T^{-1}$ where $T$ is any non-degenerate matrix. Then the series $T^{-1} y T$ satisfies the equation with parameters $\kappa$, $a$ and $b$ and it has the arbitrary element of the orbit ${\cal O}$ as $p$. \qquad The theorem is proved.

\section{Reductions of partial differential equations}

\subsection{Matrix nonlinear Scr\"odinger equation}

The matrix NLS equation \eqref{NLS} admits the following reduction (which defines solutions that are invariant with respect to some one-parametric subgroup in the group generated by the Galilean transformation, the translation of $t$ and the conjugation by the matrix exponent of a constant matrix):
\begin{equation}\label{NLS.Gal}
 u=e^rp(z),\qquad v=q(z)e^{-r},\qquad r=\frac{1}{6}(t^3-3xt)-tb,\qquad z=x-\frac{1}{2}t^2,\qquad b\in\Mat_n.     
\end{equation}
One can easily check that this substitution reduces equations (\ref{NLS}) to the following ODE system with respect to $p$ and $q$ (cf. with equation (3.1) in \cite{Retakh_Rubtsov_2010}):
\begin{equation}\label{NLS.red}
 p''= -\frac{1}{2}zp-bp-2pqp,\qquad q''= -\frac{1}{2}zq-qb-2qpq,\qquad b\in\Mat_n. 
\end{equation}
The order of this system can be reduced by one due to the first integral
\begin{equation}\label{NLS.int}
 qp'-q'p=c,\qquad c\in\Mat_n.
\end{equation}
Another way to reduce the order is by passing to the logarithmic derivatives of $p$ and $q$. It turns out that to use both methods we can retain only one of the matrix constants $b$ or $c$. The second one should be chosen scalar. Let us consider these two possibilities.
\medskip

1) Let $b\in\Mat_n$ and $2c=\gamma\in\mathbb C$. Then we introduce the variables
\begin{equation}\label{NLS.sub1}
 f=p'p^{-1},\qquad g=q^{-1}q',\qquad h=2pq,
\end{equation}
and the system (\ref{NLS.red}) takes the form
\[
 f'=-f^2-\frac{1}{2}z-b-h,\qquad g'=-g^2-\frac{1}{2}z-b-h,\qquad h'=fh+hg.
\]
Since $\gamma$ is scalar, the first integral (\ref{NLS.int}) can be expressed in terms of $f,g$ and $h$, too:
\[
 f-g=\gamma h^{-1}.
\]  
By its use, we eliminate $g$ and arrive at equations
\begin{equation}\label{fh0}
 f'=-f^2-\frac{1}{2}z-b-h,\qquad h'=fh+hf-\gamma.
\end{equation}
Finally, we find $h$ from the first equation and substitute into the second one; this brings to equation \ref{P20} for the variable $y=f$ and with $a=\gamma-\frac{1}{2}$. 

\begin{remark} 
The symmetry of equations with respect to $f$ and $g$ immediately implies that $g$ also satisfies the same equation with the free term $-\gamma-\frac{1}{2}$. By combining this with the change of sign $g\to -g$ we obtain the B\"acklund transformation for \ref{P20}. One can prove that iterations of this transformation in terms of $p$ or $q$ are governed by the non-Abelian Toda lattice. Vice versa, it is possible to derive \ref{P20}, starting from the Toda lattice and imposing the corresponding reduction (see \cite{Retakh_Rubtsov_2010}). 
\end{remark}

2) Now let $c\in\Mat_n$ and $b$ be a scalar. In this case, according to \eqref{bbb}, we can set $b=0$. Now we apply another substitution
\begin{equation}\label{NLS.sub2}
 f=p^{-1}p',\qquad g=q'q^{-1},\qquad h=2qp
\end{equation}
(alternatively, we can use (\ref{NLS.sub1}) as before, but replace (\ref{NLS.int}) with another first integral $p'q-pq'=\tilde c$ which appears for $b=0$). As a result, the system (\ref{NLS.red}) with $b=0$ turns into
\[
 f'=-f^2-\frac{1}{2}z-h,\qquad g'=-g^2-\frac{1}{2}z-h,\qquad h'=hf+gh,
\]
and the first integral (\ref{NLS.int}) takes the form
\[
 hf-gh=2c.
\]  
By eliminating $g$, we obtain the system
\begin{equation}\label{fh1}
 f'=-f^2-\frac{1}{2}z-h,\qquad h'=2hf-2c.
\end{equation}
In this case, $f$ satisfies equation \ref{P21} with $\kappa=-1$ and $a=2c-\frac{1}{2}$.

An isomonodromic Lax pair for the system (\ref{NLS.red}) can be easily obtained by extending the substitutions (\ref{NLS.Gal}) with the change $\zeta=\lambda-\tfrac{1}{4}t$ and applying this to the well-known representation $U_t=V_x+[V,U]$ of the system (\ref{NLS}), with the matrices
\[
 U=\begin{pmatrix}  
 \lambda & -v \\
  u & -\lambda
 \end{pmatrix},\quad
 V=-2\lambda U+\begin{pmatrix}
  -vu & v_x \\
  u_x & uv
 \end{pmatrix}.
\]
After some simple algebraic manipulations, this leads to the representation \eqref{AB} for equation (\ref{NLS.red}) with the matrices
\[
 B=\begin{pmatrix}
  \zeta & -q\\ p & -\zeta
 \end{pmatrix},\qquad 
 A=\begin{pmatrix}
  8\zeta^2+4qp+z & -8\zeta q-4q'\\[1mm]
  8\zeta p-4p' & -8\zeta^2-4pq-z-4b
 \end{pmatrix}.
\]
From here, it is possible to obtain also the Lax pairs for the above systems in variables $f,h$, by applying gauge transformations. For the system (\ref{fh0}) we have
\[
B=\begin{pmatrix}
   \zeta+f & -\tfrac{1}{2}h\\ 1 & -\zeta
 \end{pmatrix},\qquad
  A=\begin{pmatrix}
   8\zeta^2+2h+z & -4\zeta h-2hf+2\gamma\\[1mm]
   8\zeta -4f & -8\zeta^2-2h-z-4b
  \end{pmatrix}.
 \]
Note that equation \ref{P20} admits also another representation (\ref{AB}) with 
\[
 B=\begin{pmatrix} 
  \zeta & y\\
  y & -\zeta
 \end{pmatrix},
 \qquad
 A=-4\zeta B+\begin{pmatrix}  
  2y^2+z+2b & -2y'-\frac{a}{\zeta}\\[1mm] 
  2y'-\frac{a}{\zeta} & -2y^2-z-2b
 \end{pmatrix},
\] 
which is equivalent to representation from \cite{Irfan_2012}. We were not able to establish a gauge equivalence between these two Lax pairs.

In the case of the system (\ref{fh1}) we obtain \eqref{AB} with
\[
  B=\begin{pmatrix}
   \zeta & -\tfrac{1}{2}h\\ 1 & -\zeta-f
  \end{pmatrix},\qquad
  A=\begin{pmatrix}
   8\zeta^2+2h+z & -4\zeta h-2hf+4c\\[1mm]
   8\zeta -4f & -8\zeta^2-2h-z
  \end{pmatrix}.
\]
This representation for equation \ref{P21} is equivalent to the Lax pair obtained in the next section starting from the mKdV equation.

\subsection{Matrix mKdV equations}

The scalar P$_2$ equation can be obtained also from the mKdV equation
\[
 u_t=u_{xxx}-6u^2u_x
\]
by the self-similar reduction related with the scaling group. This brings to equation $y'''=6y^2y'+y+zy'$ which reduces to P$_2$ by integration. As usual, the zero curvature representation for mKdV turns into an isomonodromic Lax pair. We describe this procedure in the matrix setting.

The first matrix analog of the mKdV \cite{Marchenko_1986} reads
\begin{equation}
\label{mKdV1}
 u_t= u_{xxx}-3u^2u_x-3u_xu^2
 \end{equation}
and admits the representation $U_t=V_x+[V,U]$ with
\[
 U=\begin{pmatrix}
  \lambda & u\\
  u & -\lambda
 \end{pmatrix},\qquad
 V=4\lambda^2U +
 \begin{pmatrix}
  -2\lambda u^2+[u,u_x] & 2\lambda u_x+u_{xx}-2u^3\\[1mm]
  -2\lambda u_x+u_{xx}-2u^3 & 2\lambda u^2+[u,u_x]
 \end{pmatrix}.
\]
In the matrix case, we combine the scaling group with the group of conjugations and apply the self-similar Ansatz
\begin{equation}\label{mKdV.self}
 u=\varepsilon\tau e^{\log(\tau)d}y(z)e^{-\log(\tau)d},\qquad 
 \tau=t^{-1/3},\qquad z=\varepsilon\tau x,\qquad 3\varepsilon^3=-1,\qquad d\in\Mat_n,
\end{equation}
which turns equation (\ref{mKdV1}) into
\begin{equation}\label{mKdV1.red}
 y'''=3y^2y'+3y'y^2+y+zy'+[d,y].
\end{equation}
In contrast to the scalar case, this equation does not have the first integral even for $d=0$. Nevertheless, the reduction of order is still possible due to a {\em partial} first integral. It turns out that (\ref{mKdV1.red}) admits special solutions described by a second-order equation. This can be easily proved by direct elimination of the third and second derivatives in virtue of equation (\ref{gP2}) of the general form. The coefficients are determined by equating the remaining terms and we obtain that (\ref{mKdV1.red}) is satisfied by the equation \ref{P21} with $a=-\kappa d$. To obtain the corresponding Lax pair, we supplement the substitution (\ref{mKdV.self}) with the change $\lambda=\varepsilon\tau\zeta$. Then the dependence of $U$ and $V$ on $\tau$ is separated out:
\[
 U=\varepsilon\tau e^{\log(\tau)d}Be^{-\log(\tau)d},\qquad V=(\varepsilon\tau)^3e^{\log(\tau)d}Ke^{-\log(\tau)d},
\]
where $B$ and $K$ depend on $\zeta$, $z$, $y$ and $y'$ (the second derivative in $K$ is replaced according to \ref{P21}).  For the derivations, this change yields
\[
 \partial_t=(\varepsilon\tau)^3(\tau\partial_\tau+z\partial_z-\zeta\partial_\zeta),\qquad
 \partial_x=\varepsilon\tau\partial_z.
\]
As a result, the equation for $U$ and $V$ takes the form
\[
 -\zeta B_\zeta+zB'+B+[dI,B]=K'+[K,B]
\]
and further change $A=-\zeta^{-1}(K-zB-dI)$ brings to the standard form of the Lax pair \eqref{AB}. The calculations by this scheme give, for equation \ref{P21},
\[
 B=\begin{pmatrix}
  \zeta & y\\
  y & -\zeta
 \end{pmatrix},\qquad
 A= -4\zeta B +\begin{pmatrix}
  2y^2+z & -2y'\\
  2y' & -2y^2-z
 \end{pmatrix}
 -\zeta^{-1}(\kappa[y,y']+a)
 \begin{pmatrix} \kappa & 1\\ 1 &\kappa \end{pmatrix}.
\]

The second matrix analog of mKdV was introduced in \cite{Khalilov_Khruslov_1990}. It turns out that, in contrast to (\ref{mKdV1}), an arbitrary matrix constant can be introduced directly into this equation:
\begin{equation}\label{mKdV2}
 u_t= u_{xxx}+3[u,u_{xx}]-6uu_xu-3(u_x+u^2)c-3c(u_x-u^2),\qquad c\in\Mat_n.
\end{equation}
The origin of this constant is related with the Miura map for the matrix KdV equation which is constructed by solution of the linear Schr\"odinger equation
$$
\psi''+ v \psi + \psi c = 0, \qquad c\in\Mat_n.
$$
The zero curvature representation for (\ref{mKdV2}) is given by the matrices
\begin{gather*}
U=\begin{pmatrix}
  0 & 1\\
  c-\lambda & -2u
 \end{pmatrix},\\[1.5mm]
V=2\begin{pmatrix}
  2u(c-\lambda) & 
  -u_x-u^2-c-2\lambda\\[1mm]
  (u_x-u^2-c-2\lambda)(c-\lambda) & 
  2\lambda u-u_{xx}-[u,u_x]+2u^3+2cu+2uc
 \end{pmatrix}.
\end{gather*}
At first glance, introducing the terms with $c$ into (\ref{mKdV2}) makes the self-similar substitution (\ref{mKdV.self}) impossible, because the homogeneity of the equation is violated. However, we can complement this substitution as follows:
\[
 c=(\varepsilon\tau)^2e^{\log(\tau)d}c_0e^{-\log(\tau)d},
\]
where $c_0$ is an arbitrary constant matrix such that $2c_0+[d,c_0]=0$. Indeed, the differentiation with respect to $\tau$ shows that this relation implies that the matrix $c$ is constant as well (notice, that for scalars this gives $c=c_0=0$, that is, this trick is only possible in the non-abelian case). As the result, (\ref{mKdV2}) is reduced to the equation
\begin{equation}\label{mKdV2.red}
 y'''=3[y'',y]+6yy'y+y+zy'+3(y'+y^2)c_0+3c_0(y'-y^2)+[d,y]. 
\end{equation}
Like in the case of equation (\ref{mKdV1.red}), its order can be reduced by a partial first integral. By eliminating $y'''$ and $y''$ in virtue of equation of the form (\ref{gP2}) we find that if $\kappa=-2$, $3c_0=b$ and $d=-a$ then (\ref{mKdV2.red}) is a consequence of equation \ref{P22}. Further manipulations with the matrices $U$ and $V$ differ from the previous example only by insignificant details (one has to set $\lambda=(\varepsilon\tau)^2\zeta$ and to apply the conjugation by a suitable constant matrix), and this leads to the Lax representation \eqref{AB} with
\begin{gather*}
 B=\begin{pmatrix}
  0 & 1\\
  \frac{1}{3}b-\zeta & -2y
 \end{pmatrix},\\[1.5mm]
 A= \frac{1}{\zeta}\begin{pmatrix}
  2\zeta y-\frac{2}{3}yb-\frac{1}{2}(a+1) & 2\zeta+y'+y^2+\frac{1}{3}b+\frac{z}{2}\\[1mm]
  (2\zeta-y'+y^2+\frac{1}{3}b+\frac{z}{2})(\frac{1}{3}b-\zeta) & 
  -2\zeta y -[y,y']+\frac{1}{3}by+\frac{1}{3}yb+\frac{1}{2}a
 \end{pmatrix},
\end{gather*}
for equation \ref{P22} with $\kappa=-2$.

\section{Conclusion}

We have demonstrated that there are at least three matrix generalizations of the second Painlev\'e equation that satisfy the Painlev\'e--Kovalevskaya test and admit isomonodromic Lax pairs. Of course, a similar diversity should be expected for other Painlev\'e equations, too. Although the literature on their non-Abelian generalizations is quite rich, the question on the number of different versions remains open, in particular, the question of how the non-Abelian constants can enter into the equations.

As one of examples, we present the following matrix version of the P$_4$ equation
\[
 y''= \frac{y'^2}{2\, y}  +\frac{3}{2}y^3+4zy^2 +2(z^2-\alpha)y+\frac{\beta}{y}, 
\]
which contains the matrix constant $c\in\Mat_n$ and the scalar constant $\alpha\in{\mathbb C}$:
\[
 y''= \frac{1}{2}(y'+2c)y^{-1}(y'-2c) +\frac{1}{2}[y,y'] +\frac{3}{2}y^3+4zy^2 +2(z^2-\alpha)y+cy+yc.
\]
This equation can be obtained by eliminating $h$ from the equivalent system
\[
 -y'=y^2+2hy+2zy+2c,\qquad h'=h^2+yh+hy+2zh+\alpha-1
\]
which admits the representation \eqref{AB} with the matrices
\[
 B=\begin{pmatrix}
  -2\zeta^2 & 2\zeta \\
  \zeta h & y-h
 \end{pmatrix},\qquad
 A=\begin{pmatrix}
  4\zeta^3-2\zeta(h+2z)+2\zeta^{-1}c & -4\zeta^2+2y+2h+4z \\
  -2\zeta^2h-hy-\alpha+1 & 2\zeta h+\alpha\zeta^{-1}
 \end{pmatrix}.
\]
This example, as well as a set of other versions of P$_4$, can be obtained by self-similar reduction from equations of nonlinear Schr\"odinger type (in this case we have used equation S$'_5(0,1)$ from \cite{Adler_Sokolov_2020}). For such equations, the reduction procedure brings to systems of two second order equations and  further reduction of order is usually more complicated than in the scalar setting.

\subsubsection*{Acknowledgements}

The authors are grateful to R.Conte and V.Rubtsov for useful discussions. This work was carried out under the State Assignment 0029-2021-0004 (Quantum field theory) of the Ministry of Science and Higher Education of the Russian Federation.
 

\end{document}